\numberwithin{equation}{section} 
\numberwithin{figure}{section} 
\theoremstyle{plain}
\newtheorem{thm}{Theorem}
\newtheorem{lem}[thm]{Lemma} 
\theoremstyle{definition}
\newtheorem{asst}{Assumption}
\theoremstyle{remark}
 \newtheorem*{rem*}{Remark}
\newcommand{\ip}[1]{\langle #1 \rangle}
\newcommand{\Ev}[1]{\E \left( #1 \right)}  
\newcommand{\norm}[1]{\left\Vert#1\right\Vert}
\newcommand{\abs}[1]{\left\vert#1\right\vert}
\newcommand{\set}[1]{\left\{#1\right\}}
\renewcommand{\vec}[1]{\mathbf{#1}}
\newcommand{\bb}[1]{\mathbb{#1}}
\newcommand{\cu}[1]{\mathcal{#1}}
\newcommand{\wh}[1]{\widehat{#1}}
\def\Var{\operatorname{Var}}
\def\e{\mathrm e}
\def\im{\mathrm i}
\def\Im{\mathrm{Im}}
\def\half {\frac{1}{2}}
\def\1{{\mathsf 1}}
\def\di{\mathrm d}
\def\grad{\nabla}
\def\wt{\widetilde}
\def\rightharpoondownfill@{%
    \arrowfill@\relbar\relbar\rightharpoondown}
\def\rightharpoonupfill@{%
    \arrowfill@\relbar\relbar\rightharpoonup}
\def\leftharpoondownfill@{%
    \arrowfill@\leftharpoondown\relbar\relbar}
\def\leftharpoonupfill@{%
    \arrowfill@\leftharpoonup\relbar\relbar}
\newcommand{\xrightharpoondown}[2][]{%
    \ext@arrow 0359\rightharpoondownfill@{#1}{#2}}
\newcommand{\xrightharpoonup}[2][]{%
    \ext@arrow 0359\rightharpoonupfill@{#1}{#2}}
\newcommand{\xleftharpoondown}[2][]{%
    \ext@arrow 3095\leftharpoondownfill@{#1}{#2}}
\newcommand{\xleftharpoonup}[2][]{%
    \ext@arrow 3095\leftharpoonupfill@{#1}{#2}}
\newcommand{\xleftrightharpoons}[2][]{\mathrel{%
    \raise.22ex\hbox{%
        $\ext@arrow 3095\leftharpoonupfill@{\phantom{#1}}{#2}$}%
    \setbox0=\hbox{%
        $\ext@arrow 0359\rightharpoondownfill@{#1}{\phantom{#2}}$}%
    \kern-\wd0 \lower.22ex\box0}%
}
\newcommand{\xrightleftharpoons}[2][]{\mathrel{%
    \raise.22ex\hbox{%
        $\ext@arrow 3095\rightharpoonupfill@{\phantom{#1}}{#2}$}%
    \setbox0=\hbox{%
        $\ext@arrow 0359\leftharpoondownfill@{#1}{\phantom{#2}}$}%
    \kern-\wd0 \lower.22ex\box0}%
} \makeatother
\def\Z{\mathbb Z}
\def\N{\mathbb N}
\def\R{\mathbb R}
\def\C{\mathbb C}
\def\E{\mathbb E}
\def\ra{\rightarrow}
\def\ran{\operatorname{ran}}
\def\tr{\operatorname{tr}}    
\def\Re{\operatorname{Re}}
\def\Im{\operatorname{Im}}
\def\tem{\textemdash \ }
\title{Diffusive propagation of wave packets in a fluctuating periodic potential}
\author {Eman Hamza}
\address{Department of Mathematics \\ Michigan State University \\ East Lansing, MI 48823}
\curraddr{Department of Physics\\ Cairo University \\ Cairo, Egypt}
\author{Yang Kang}
\address{Department of Mathematics \\ Michigan State University \\ East Lansing, MI 48823}
\author{Jeffrey Schenker}
\address{Department of Mathematics \\ Michigan State University \\ East Lansing, MI 48823}
\email{jeffrey@math.msu.edu}
\thanks{The third author was supported by an NSF CAREER Award DMS-08446325.}
\date{2 February, 2010, revised 26 June 2010, minor corrections 4 October 2010}
\begin{document}
\begin{abstract}  We consider the evolution of a tight binding wave packet propagating in a fluctuating
periodic potential. If the fluctuations stem from a stationary Markov process satisfying certain technical criteria, we 
show that the square amplitude of the wave packet after diffusive rescaling converges to a  superposition of solutions of a heat equation. 
\end{abstract}
\keywords{Schr\"odinger equation, random Schr\"odinger equation, Markov process, diffusion, quantum Brownian motion}
\subjclass[2010]{82D30, 81V99, 60J70}
\maketitle
\pagestyle{plain}
\bibliographystyle{amsplain}

\section{Introduction}   

It is generally expected that wave packets evolving in a homogeneous random environment propagate diffusively over long time scales, unless recurrence effects are strong enough to induce Anderson localization.  If furthermore the environment fluctuates in time, recurrence effects should be irrelevant, suggesting that diffusion is universal for wave motion in time dependent random systems.  This idea was confirmed by Ovchinnikov and Erikman \cite {Ovchinnikov:1974eu}, who showed diffusion for a tight binding Schr\"odinger equation with white noise potentials. Pillet \cite{Pillet:1985oq} considered a more general setting in which the potentials were Markov processes, but not necessarily white noise.  He demonstrated the absence of binding and derived a Feynman-Kac formula. This Feynman-Kac formula was used by Tcheremchantsev \cite{Tcheremchantsev:1997kl, Tcheremchantsev:1998qe} to show that position moments scale diffusively up to logarithmic corrections. Recently, two of us  proved diffusion of wave packets and diffusive scaling  \cite{KS} for a large family of Markov models, including those considered by Tcheremchantsev.  For a recent discussion of the physics and physical applications of the tight binding Schr\"odinger equation with time dependent randomness  we refer to \cite{Amir:2009}.

The study of diffusion for disordered quantum systems, or ``Quantum Brownian motion," has recently attracted the attention of a number of authors.  Diffusion  in the presence of  a weak static random potential  for a quantum particle on a lattice of dimension three or higher has been demonstrated only up to a finite time scale proportional to an inverse power of the disorder strength \cite{Erdos2007:621, Erdos2007:1, Erdos2008:211}. Fr\"ohlich, Pizzo and De Roeck have proved diffusion, for arbitrarily long times, for a  quantum particle on a lattice weakly coupled to an array of independent heat baths \cite{W.-De-Roeck:rz}.  In \cite{W.-De-Roeck:rz}, it is mentioned that the method used therein also applies to a particle in a time dependent potential provided one has exponential decay of time correlations, such as one has for the Markov potentials in \cite{KS}.  However the proof in \cite{W.-De-Roeck:rz} relies on a polymer expansion which restricts the result to weak coupling (this is not the case in \cite{KS}).   Another result closely related to our previous work \cite{KS} is a  recent paper on diffusion starting from a quantum master equation in Lindblad form \cite{Clark:2008}.

This note and the aforementioned \cite{Pillet:1985oq,Tcheremchantsev:1997kl, Tcheremchantsev:1998qe,KS} are concerned with the evolution of wave packets for the ``tight binding Markov random Schr\"odinger equation:''
\begin{align}\label{Schrodinger}
\begin{cases} \im \partial_{t} \psi_{t}(x) \ = \ L \psi_{t}(x) + v_{x}(\omega(t)) \psi_{t}(x), \\
\psi_{0} \in \ell^{2}(\Z^{d}),
\end{cases}
\end{align} 
where
\begin{enumerate}
\item  $L$ is a translation invariant hopping operator on $\ell^{2}(\Z^{d})$, 
\item $v_{x}:\Omega \rightarrow \R$ are real valued functions on a probability space $\Omega$,
\item $\omega(t)$ is a Markov process on $\Omega$ with an invariant probability measure $\mu$, and
\item $v_{x}(\omega)=v_{0}(\sigma_{x}(\omega))$ where $\sigma_{x}$ is a group of $\mu$-measure preserving transformations of $\Omega$.
\end{enumerate}
(Formal definitions are given in section \ref{sec:main} below.)

The potentials considered by Tcheremchantsev \cite{Tcheremchantsev:1997kl, Tcheremchantsev:1998qe} were independent at different sites.  However, this played no role in the analysis in \cite{KS}. Nonetheless, some non-degeneracy assumption is certainly needed as can be seen by considering the case $v_{x}=v_{0}$ for all $x$, for which  the effect of the random potential is only to multiply the wave function by a time dependent random phase. The technical condition employed in \cite{KS} was
\begin{align}
\inf_{x} \norm{B^{-1}(v_x  - v_0)}_{L^{2}(\Omega)}  >0 \label{v-non-degenerate},
\end{align}  
where $B$ is the generator of the Markov process $\omega(t)$.  

Our aim here is to consider a situation in which \eqref{v-non-degenerate} is violated in a relatively strong way.  Namely, we shall consider \emph{periodic} potentials, $v_{x+Ny} = v_{x}$ for all $x,y$ with $N$ some fixed number.  Because the resulting system is periodic under translations by elements of $N \Z^{d}$, there is a conserved ``quasi-momentum.'' Our main result, in short, is that after taking into account conservation of quasi-momentum the motion of the wave packet is diffusive.  More specifically, over long times one sees a superposition of diffusions:
\begin{equation}\label{eq:mainintro}
\lim_{\tau \ra \infty} \sum_{x \in \Z^{d}} e^{-\im \frac{1}{\sqrt{\tau}} \vec{k}\cdot x } \Ev{\abs{\psi_{\tau t}(x)}^{2}} \ = \ \int_{\mathbb{T}_{N}^{d}} e^{-t \sum_{i,j=1}^{d}D_{i,j}(\vec{p}) 
\vec {k}_{i}\vec {k}_{j}} m(\vec{p}) \di \vec{p},  \end{equation}
where $\mathbb{T}_{N}^{d}=[0,2\pi/N)^{d}$, $\vec{p} \mapsto D_{i,j}(\vec{p})$ is a continuous function taking values in the positive definite matrices, independent of $\psi_{0}$, and
\begin{equation}\label{eq:mintro}
m(\vec{p}) \ = \ \frac{1}{(2 \pi )^{d}} \sum_{\zeta \in \Lambda} \abs{\wh{\psi}_{0}\left (\vec{p} + \frac{2\pi}{N} \zeta \right )}^{2}
\end{equation}
with $\Lambda =[0,N)^{d} \cap \Z^{d}$.   The quantity $m(\vec{p})$ is the amplitude of the initial wave packet at quasi-momentum $\vec{p}$ \tem \  $\wh{\psi}_{0}$ denotes the Fourier transform of $\psi_{0}$:
\begin{equation}
\wh{\psi}_{0}(\vec{k}) \ =\ \sum_{x}\e^{\im x \cdot \vec{k}} \psi_{0}(x),
\end{equation}
if $\psi_{0} \in \ell^{1} \cap \ell^{2}$.

To understand the meaning of \eqref{eq:mainintro}, consider the following position space density
\begin{equation}
\di R_{t}(x) = \sum_{\xi \in \Z^{d}} \Ev{|\psi_{t}(\xi)|^{2}}  \delta(x-\xi)\di x,
\end{equation}
a probability measure on $\R^{d}$.  (Here $\delta(x)\di x$ is the Dirac measure with mass $1$ at $0$.) After taking inverse Fourier transforms of both sides, \eqref{eq:mainintro} shows
\begin{multline}
\int_{\R^{d}} \phi(x) \di R_{\tau t}(\sqrt{\tau} x) \ \xrightarrow[\tau \ra \infty]{} \\  \int_{\R^{d}} \phi(x) \left [\int_{\mathbb{T}_{N}^{d}} 
\frac{1}{(4 \pi t)^{\frac{d}{2}} \sqrt{\det D_{i,j}(\vec{p})}}  \e^{-\frac{1}{4 t} \sum_{i,j} D_{i,j}^{-1}(\vec{p}) x_{i}x_{j}} m(\vec{p}) \di \vec{p}  \right  ]\di x,
\end{multline}
for any test function $\phi$ on $\R^{d}$ which is, say, smooth and compactly supported.  The function appearing as the integrand inside square brackets on the right hand side is the fundamental solution to an anisotripic diffusion equation, with diffusion matrix $D_{i,j}(\vec{p})$,
\begin{equation}\label{eq:diffusionequation}
\frac{\partial}{\partial t} u_{t }(x) \ = \ \sum_{i,j} D_{i,j}(\vec{p})\frac{\partial}{\partial x_{i}} \frac{\partial}{\partial x_{j}} u_{t}(x).
\end{equation}
Thus \eqref{eq:mainintro} can be understood as saying that the position space density $dR_{t} (x)$, after diffusive rescaling $t \mapsto \tau t$ and $x \mapsto \sqrt{\tau}x$, converges in the weak$^{*}$ sense to
\begin{equation}
 \di R_{\tau t}(\sqrt{\tau} x)  \xrightarrow[\tau \ra \infty]{\text{weak}^{*}}  \left [ \int_{\mathbb{T}_{N}^{d}}u_{t}(x;\vec{p})  m(\vec{p}) \di \vec{p} \right ] \di x ,
\end{equation}
where $u_{t}(x;\vec{p})$ satisfies \eqref{eq:diffusionequation} with $u_{0}(x;\vec{p}) \di x = \delta(x) \di x$.
That is over long time scales, after diffusive rescaling, the mean square amplitude breaks into components for each $\vec{p}$, with each component propagating independently and according to a diffusion equation, which is to say a ``super-position of diffusions.''

The result is stated formally in section \ref{sec:main} after we give the required assumptions.  These assumptions are somewhat technical, so it may be useful to have a simple example in mind.  Fix a function $U:\Z^{d} \rightarrow \R$ periodic under translations in $N \Z^{d}$, that is, $U(x-Ny)=U(x)$ for all $x,y \in \Z^{d}$.  Now let  $\omega(t)$ be a continuous time random walk on $\Lambda =[0,N)^{d}\cap \Z^{d}$ taken with periodic boundary conditions and with independent identically distributed exponential holding times at each step.  The probability space is just $\Lambda$ with the measure $\mu$ normalized counting measure.  Take the potentials $v_{x}$ to be $v_{x}(\omega) = U(x-\omega)$ so that the Schr\"odinger equation describes a particle in a ``jiggling'' periodic potential:
\begin{equation}
\im \partial_{t} \psi_{t}(x) \ = \ \sum_{\zeta}h(\zeta) \psi_{t}(x-\zeta) + U(x-\omega(t)) \psi_{t}(x).  
\end{equation}
Our result shows that \eqref{eq:mainintro} holds provided $U$ has no smaller periods, i.e. that $$\sum_{y \in \Lambda}|U(x+y)-U(y)| \neq 0 , \quad x \in \Lambda \text{ and }x \neq 0.$$

\section{Statement of the main result: A superposition of  diffusions  }\label{sec:main}
\subsection{Assumptions} Our main result is formulated with the following assumptions.  (See \cite{KS} for a more detailed discussion of the framework.)  

\begin{asst}
We are given a topological space $\Omega$, a Borel probability measure $\mu$, and a Markov process on $\Omega$ with right continuous paths for which $\mu$ is an invariant measure.  Furthermore, we suppose that there is a representation of $\Z^{d}$, $x \mapsto \sigma_{x}$, in terms of $\mu$-measure preserving maps $\sigma_{x}:\Omega \ra \Omega$ such that the paths of $\sigma_{x}(\omega(\cdot))$ have the same distribution as the paths of $\omega(\cdot)$, for all $x \in \Z^{d}$.
\end{asst}

We denote by $\Ev{\cdot}$ expectation with respect to the paths of the Markov process with the initial condition $\omega(0)$ distributed according to $\mu$. By the invariance of $\mu$, we have
\begin{equation}\label{eq:invariantexp}
\Ev{f(\omega(t))} = \int_{\Omega}f(\alpha) \di \mu(\alpha)
\end{equation}
for any $t$ and any $f \in L^{1}(\Omega)$.   Furthermore,  the map $S_{t}$  given by
\begin{equation} 
  S_{t}f(\alpha) \ = \ \E( f(\omega(0)) | \omega(t) =\alpha) 
\end{equation}
defines a   strongly continuous contraction semi-group on $L^{2}(\Omega)$. By the Lumer-Phillips theorem, $S_{t}$ is generated by a maximally dissipative operator $B$ with dense domain $\cu{D}(B)$.  Since $S_{t}1 =1$ for all $t$, $B1=0$ and $0$ is an eigenvalue of $B$. Since $B$ is dissipative, we also have that its numerical range lies in the right half plane.  We suppose further  that $B$ is sectorial and satisfies a ``spectral gap'' condition:
 
 \begin{asst}  There exist $\gamma < \infty$ and $T >0$ such that
\begin{align}
	  \abs{\Im \ip{f, B f}_{L^{2}(\Omega)}} \ & \le \ \gamma \Re \ip{f, B f}_{L^{2}(\Omega)} , \quad  \label{eq:sectoriality} \intertext{and}
	\Re\ip{f, B f}_{L^{2}(\Omega)} & \ge \frac{1}{T} \Var ( f) \label{eq:spectralgap}
\end{align}
for all $f \in \cu{D}(B)$, where $\Var(f):=\int_{\Omega} f^{2}\di \mu - (\int_{\Omega} f \di \mu)^{2}$.\end{asst}

The potential $v_{x}:\Omega \ra \R$ and hopping operator $L$ are assumed to be translation invariant, and $L$ should satisfy a non-degeneracy condition that precludes hopping only in a sub-lattice:
\begin{asst}
The potential is given by Borel measurable bounded functions $v_{x} :\Omega \ra \R$  such that
$$ \quad v_{x} = v_{0} \circ \sigma_{x} .  $$
The hopping operator is given by
$$L \psi(x) \ = \ \sum_{y} h(x-y) \psi(y),$$
where $h(-x)=h(x)^{*}$, $ \sum_{x} |x|^{2} \abs{h(x)} <  \infty$,  and for each  non-zero vector $\vec k \in \R^{d}$, there is some $x \in \Z^{d}$ such that $h(x) \neq 0$ and $\vec k \cdot x  \ne 0$.
\end{asst}

Finally, since we are concerned with periodic potentials, we suppose
\begin{asst} There is $N \in \N$, $N > 1$, such that $\sigma_{Nx}= \operatorname{Id}$ for all $x \in \Z^{d}$.  Furthermore, we suppose that  $\norm{v_{x}-v_{0}}_{L^{\infty}(\Omega)}>0$ for all  $x \in [0,N)^{d}\cap \Z^{d}$, $x \neq 0$.
\end{asst}
\begin{rem*} More generally, we might allow different periods in each of the coordinate directions: $N_{1}, \ldots N_{d}$ such that $\sigma_{y} = \operatorname{Id}$ whenever $y = (N_{1} \alpha_{1}, \ldots, N_{d}\alpha_{d})$ with $\alpha_{1}, \ldots, \alpha_{d} \in \Z$.  The result stated below holds also for this case with essentially the same proof.  We choose to work with equal periods for notational clarity.
\end{rem*}

Let $\Lambda  = [0,N)^d \cap \Z^{d}$, as above, and let $x,y\in\Lambda$.   Since $v_{x}-v_{y}$ is mean zero, it is in the domain of $B^{-1}$.  Furthermore, it follows from Assumption 4 that $\norm{v_{x}-v_{y} }_{L^{2}(\Omega)}\neq 0$ if $x\neq y$, in which case $\norm{B^{-1}(v_{x}-v_{y})}_{L^{2}(\Omega)}\neq 0$. Since $\Lambda$ is finite, we conclude that there is $\chi > 0$ such that
\begin{equation}\label{eq:nondeg}
\norm{B^{-1} (v_{x}- v_{y})}_{L^{2}(\Omega)} \ge \chi \ , \quad x,y \in \Lambda , \ x \neq y.
\end{equation}
Eq. \eqref{eq:nondeg} will play a key role in the proof below.

\subsection{Main result}
Consider the density matrix
\begin{equation}
\rho_{t}(x,y) \ = \ \psi_{t}(x) \psi_{t}(y)^{*}.
\end{equation}
It is well-known that  $\rho_t(x,y)$ satisfies
\begin{equation}\label{MAMDM}
\partial_{t} \rho_{t}(x,y) \ = \ - \im \sum_{\zeta} h(\zeta) \left[ \rho_{t}(x-\zeta,y) - \rho_{t}(x,y+\zeta) \right ]
-\im \left ( v_{x}(\omega(t)) - v_{y}(\omega(t)) \right ) \rho_{t}(x,y).
\end{equation}
More generally, we may consider solutions to \eqref{MAMDM} with an initial condition 
\begin{multline}
\rho_{0} \in \cu{DM} \ := \ \left \{ \rho: \Z^{d} \times \Z^{d} \ra \C \ : \ \rho \text{ is the kernel of a non-negative definite,} \right . \\
\left.  \text{trace class operator on $\ell^{2}(\Z^{d})$} \right \}.
\end{multline}
Recalling the notation $\mathbb{T}_{N}^{d}=[0,2\pi/N)^{d}$, we now state our theorem.
 \begin{thm}\label{thm:main} 
The solution to \eqref{MAMDM} with initial condition $\rho_{0} \in \cu{DM}$ satisfies
\begin{equation}\label{eq:main}
\lim_{\tau \ra \infty }\sum_{x} \e^{-\im \frac{\vec{k}}{\sqrt{\tau}} \cdot x} \Ev{\rho_{\tau t} (x,x)} \ = \   \int_{\bb{T}_{N}^d}  \, \e^{-t  \sum_{i,j}D_{i,j}(\vec p)\vec{k}_{i} \vec{k}_{j}}  m(\vec p)\di \vec{p},
\end{equation}
where $\vec{p} \mapsto D_{i,j}(\vec p)$  is a continuous function taking values in the positive-definite matrices and 
$$m(\vec{p}) \ = \ \frac{N^{d}}{(2\pi)^{d}} \wh{f}(N\vec{p}) , $$ with $\wh{f}$ the Fourier transform of 
$f(x) = \sum_{y \in \Z^{d}} \rho_{0}(y+Nx,y) .$ 
\end{thm}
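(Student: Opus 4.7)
My strategy follows the approach of \cite{KS}, exploiting an enhanced translation symmetry of the Markov Liouvillian that periodicity of the potential provides. The first step is to pass to the augmented density
\begin{equation*}
R_t(x, y; \alpha) \ := \ \Ev{\rho_t(x, y) \,\big|\, \omega(t) = \alpha},
\end{equation*}
which, by the Markov property and a standard conditioning argument, satisfies a deterministic evolution equation on $\ell^2(\Z^d \times \Z^d) \otimes L^2(\Omega)$ built from the Schr\"odinger commutator $-\im[L, \cdot]$, the multiplication operator $-\im(v_x - v_y)$, and the adjoint of the Markov generator $B$. The key feature in the periodic setting is that $v_{x + N z} = v_x$ as elements of $L^\infty(\Omega)$ (since $\sigma_{N z} = \mathrm{Id}$), so the full generator commutes with the \emph{independent} lattice shifts $(x, y) \mapsto (x + N z_1, y + N z_2)$ for all $z_1, z_2 \in \Z^d$ --- a strictly larger symmetry than the diagonal action used in \cite{KS}. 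A simultaneous Bloch transform in $x$ and $y$ therefore decomposes the evolution into a two-parameter family of semigroups $\e^{t \mathcal{L}_{\vec{p}_1, \vec{p}_2}}$ on the fibre $\ell^2(\Lambda \times \Lambda) \otimes L^2(\Omega)$, parametrized by $(\vec{p}_1, \vec{p}_2) \in \bb{T}_N^d \times \bb{T}_N^d$.

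Next I would re-express the observable in this basis. Writing $x = N m + u$ with $u \in \Lambda$, a direct computation with the inverse Bloch transform yields
\begin{equation*}
\sum_x \e^{-\im \frac{\vec{k}}{\sqrt{\tau}} \cdot x}\, \Ev{\rho_{\tau t}(x, x)} \ = \ \frac{N^d}{(2\pi)^d} \sum_{u \in \Lambda} \e^{-\im \frac{\vec{k}}{\sqrt{\tau}} \cdot u} \int_{\bb{T}_N^d} \wt R_{\tau t}^{\vec{p} + \vec{k}/\sqrt{\tau},\, \vec{p}}(u, u)\, \di \vec{p},
\end{equation*}
where $\wt R_t^{\vec{p}_1, \vec{p}_2}(u, v)$ denotes the $\mu$-expectation of the Bloch-transformed augmented density. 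The integration over $\vec{p} \in \bb{T}_N^d$ in \eqref{eq:main} thus appears naturally: the slow phase $\e^{-\im \vec{k}\cdot x/\sqrt{\tau}}$ couples adjacent Bloch sectors whose separation $\vec{p}_1 - \vec{p}_2 = \vec{k}/\sqrt{\tau}$ vanishes in the limit. On the Bloch-diagonal sector $\vec{p}_1 = \vec{p}_2 = \vec{p}$, the generator $\mathcal{L}_{\vec{p}, \vec{p}}$ has a simple zero eigenvalue whose eigenvector is supported on the diagonal $u = v$ and constant in $\alpha$, expressing conservation of probability within the sector. Analytic perturbation theory in the small parameter $\vec{q} = \vec{k}/\sqrt{\tau}$, precisely as in \cite{KS}, yields a perturbed ground-state eigenvalue of the form $-\sum_{i,j} D_{i,j}(\vec{p}) \vec{q}_i \vec{q}_j + o(|\vec{q}|^2)$ with $D_{i,j}(\vec{p})$ real, symmetric, positive definite, and continuous in $\vec{p}$; after multiplication by $\tau t$ and exponentiation this produces the Gaussian $\e^{-t \sum_{i,j} D_{i,j}(\vec{p}) \vec{k}_i \vec{k}_j}$ in the limit $\tau \to \infty$. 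The measure $m(\vec{p})\,\di \vec{p}$ is identified as the overlap of $\wt R_0^{\vec{p}, \vec{p}}$ with this zero eigenvector, which one checks reduces to $\frac{N^d}{(2\pi)^d} \wh{f}(N\vec{p})\,\di \vec{p}$.

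The principal technical obstacle is the uniform sectorial resolvent bound for $\mathcal{L}_{\vec{p}, \vec{p}}$ on the orthogonal complement of its zero mode. In contrast to \cite{KS}, the collision term $\im(v_x - v_y)$ restricted to each Bloch-diagonal sector becomes the finite-index operator $\im(v_u - v_v)$ on $\ell^2(\Lambda \times \Lambda) \otimes L^2(\Omega)$, which vanishes identically on the sub-diagonal $u = v$; coercivity of the generator is therefore not immediate. This is exactly where the non-degeneracy estimate \eqref{eq:nondeg} enters: for each pair $u \neq v$ in $\Lambda$ one has $\norm{B^{-1}(v_u - v_v)}_{L^2(\Omega)} \ge \chi$, and combined with the sectoriality and spectral gap of $B$ from Assumption 2 this yields, via the numerical-range argument of \cite{KS}, a uniform-in-$\vec{p}$ sector bound on $\mathcal{L}_{\vec{p}, \vec{p}}$ separating its zero eigenvalue from the rest of the spectrum. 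With this bound in place, continuity of $\vec{p} \mapsto D_{i,j}(\vec{p})$ follows from analytic perturbation theory and the interchange of the $\tau \to \infty$ limit with the $\vec{p}$-integral from dominated convergence, using compactness of $\bb{T}_N^d$.
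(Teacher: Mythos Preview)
Your overall strategy is sound, but the Bloch decomposition you choose creates a genuine difficulty that you do not address. You exploit only the $N\Z^d\times N\Z^d$ symmetry (independent period shifts in $x$ and $y$), obtaining fibres $\ell^2(\Lambda\times\Lambda)\otimes L^2(\Omega)$. The paper instead uses the larger group generated by the \emph{diagonal} shifts $S_\xi\Psi(x,y,\omega)=\Psi(x-\xi,y-\xi,\sigma_\xi\omega)$, $\xi\in\Z^d$, together with $S^{(1)}_{N\eta}$; this yields the smaller fibre $\ell^2(\Lambda)\otimes L^2(\Omega)$, parametrised by $(\vec k,\vec p)\in\bb T_1^d\times\bb T_N^d$. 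Your fibre at $(\vec p,\vec p)$ is exactly the direct sum over $\zeta\in\Lambda$ of the paper's fibres at $(\tfrac{2\pi}{N}\zeta,\vec p)$, the decomposition being implemented by the residual $\Lambda$-action $(u,v,\omega)\mapsto([u-\xi],[v-\xi],\sigma_\xi\omega)$ that you have not used.

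The consequence is that your assertion ``$\mathcal L_{\vec p,\vec p}$ has a \emph{simple} zero eigenvalue'' is in general false. The Schur-complement argument with \eqref{eq:nondeg} gives coercivity only off the diagonal $\{u=v\}$, which is $N^d$-dimensional, not one-dimensional; and the hopping part $K_{\vec p,\vec p}$ can annihilate the entire diagonal subspace. Concretely, take $d=1$, $N=2$, nearest-neighbour hopping $h(\zeta)=\delta_{|\zeta|=1}$, and $\vec p=\pi/2\in\bb T_2^1=[0,\pi)$. Then for every $u\in\{0,1\}$ one checks directly that $\mathcal L_{\pi/2,\pi/2}(\delta_u\otimes\delta_u\otimes 1)=0$: the potential and $B$ vanish trivially, and the two hopping contributions cancel because $e^{i\pi/2}+e^{-i\pi/2}=0$. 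Thus the kernel is (at least) two-dimensional, and your non-degenerate perturbation theory in $\vec q=\vec k/\sqrt\tau$ breaks down. In the paper's finer decomposition this phenomenon is invisible: the extra zero mode sits in the fibre $\wt L_{\pi,\pi/2}$ at $\vec k=\pi$, and the paper never needs to look there because only $\vec k/\sqrt\tau\to 0$ enters the observable. To repair your argument you must either invoke the residual $\Lambda$-symmetry to split off the $\vec k=0$ component before doing perturbation theory --- which amounts to the paper's decomposition --- or carry out degenerate perturbation theory and show that both the test vector $\sum_u\delta_u\otimes\delta_u\otimes 1$ and the (limiting) initial data lie in the trivial $\Lambda$-isotype, so that the spurious zero modes do not contribute.
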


\begin{rem*}We have defined the function $m$ in terms of the Fourier transform of $f$.  Since $f$ is not obviously summable or square summable, it is not immediately clear that $m$ is indeed a \emph{function}, rather than a distribution. However, in terms of the orthnormal eigenvectors $\psi_{j}$ of $\rho_{0}$ and corresponding eigenvalues $\lambda_{j}$, we have
\begin{equation}\label{eq:nobochner}
m(\vec{p}) \ = \   \frac{1}{(2\pi)^{d}} \sum_{j} \lambda_{j} \sum_{\zeta \in \Lambda} \abs{\wh{\psi_{j}}\left (\vec{p} + \frac{2 \pi}{N} \zeta \right )}^{2}.
\end{equation}
Since $\sum_{j} \lambda_{j} <\infty$ and $\abs{\wh{\psi_{j}}}^{2} \in L^{1}(\mathbb{T}^{d})$ we see that $m(\vec{p})$ is an $L^{1}$ function of $\vec{p}$.  (The function $f(x)$ can be expressed as
\begin{equation}
f(x) \ = \ \tr \rho_{0} S_{Nx}
\end{equation}
where $\rho_{0}$ is interpreted as a trace class operator and $S_{Nx}$ is the shift by $Nx$ on $\ell^{2}(\Z^{d})$, $S_{Nx}\psi(y)=\psi(y-Nx)$. It follows that $f$ is \emph{positive definite}:
\begin{equation}
\sum_{i,j=1}^{n}\zeta_{i}^{*}\zeta_{j}f(x_{i}-x_{j})\ge 0
\end{equation}
for any finite collection of points $x_{1},\ldots, x_{n} \in \Z^{d}$ and any $(\zeta_{1},\ldots, \zeta_{n})\in \C^{n}$. We conclude from Bochner's theorem that $\wh{f}$ is a non-negative measure of mass $f(0)=\tr\rho_{0}$, and  because $\lim_{x\rightarrow \infty}f(x)=0$ the measure has no point component.  But, it is not immediately clear that $\wh{f}$ is absolutely continuous with respect to Lebesuge measure so that $m$ is a function.  For this purpose \eqref{eq:nobochner} seems to be necessary.)
\end{rem*}

\section{Augmented space analysis}
In this section, we explain briefly the augmented space analysis, which was also employed in \cite[Section 3]{KS}. We begin with the following Feynman-Kac formula  \cite{Pillet:1985oq} 
\begin{equation}
	\E(\rho_t (x,y)) = \ip{ \delta_{x}\otimes \delta_{y} \otimes 1,  e^{-t L  } \rho_0 \otimes 1}_{\cu{H}}, \label {E(Rho)}
\end{equation}
which  relates $\Ev{\rho_{t}(x,x)}$ to a matrix element of a contraction semigroup $e^{-tL}$ on the augmented Hilbert space 
 \begin{equation}
 \cu{H} \ := \ L^{2}(\Z^{d} \times \Z^{d} \times \Omega). 
\end{equation}
The operator    $L$  in \eqref{E(Rho)} is given by
  $ L := \im K + \im V +B$,
where
\begin{align}
K \Psi(x,y,\omega) \ &= \    \sum_{\zeta} h(\zeta) \left[ \Psi(x-\zeta,y,\omega) - \Psi(x,y+\zeta,\omega) \right ], \\
V \Psi(x,y,\omega) \ &= \  \left ( v_{x}(\omega) -v_{y}(\omega) \right ) \Psi(x,y,\omega).
\end{align}
The Markov generator $B$ acts on $\cu{H}$ as a multiplication operator with respect to the first two coordinates: 
\begin{equation}
B [\rho \otimes f] \ = \ \rho \otimes (Bf), \quad \rho \in \ell^{2}(\Z^{d} \times \Z^{d}) , \ f \in L^{2}(\Omega).
\end{equation}

Our analysis, as in \cite{KS}, makes crucial use of the invariance of the generator $L$ with respect to simultaneous translation of position and disorder.   In the present context, we have a larger group of symmetries due to periodicity.  Namely, the generator $L$  and its constituents $K$, $V$, and $B$, commute with a group $\cu{G}$ of unitary maps on $\cu{H}$ generated by the following transformations:
\begin{enumerate}
\item Simultaneous translation of position and disorder by an arbitrary element of $\Z^{d}$:
$$ S_{\xi}\Psi(x,y,\omega)=\Psi(x-\xi,y-\xi,\sigma_{\xi}\omega),$$
\item Translation of the first position coordinate by an element of $N \Z^{d}$:
$$ S^{(1)}_{N \xi}\Psi(x,y,\omega)=\Psi(x-N\xi, y, \omega).$$
\end{enumerate}
Note that $S_{\xi}S^{(1)}_{N\eta}=S^{(1)}_{N\eta}S_{\xi}$, so the group $\cu{G}$ is isomorphic to $\Z^{d}\times \Z^{d}$.  We have chosen to use translation of the first position in the definition of $S^{(1)}$; however, since $\sigma_{N\xi}=\operatorname{Id}$, we have $S^{(2)}_{N\xi}=S_{N\xi}S^{(1)}_{-N\xi} \in \cu{G}$, where
$ S^{(2)}_{N \xi}\Psi(x,y,\omega)=\Psi(x,y-N\xi, \omega).$

Because of the invariance with respect to $\cu{G}$, $L$ is partially diagonalized by the following generalized Fourier transform:
\begin{equation}
\wt{ \Psi}(x,\omega, \vec{k}, \vec{p})  = \sum_{\xi,\eta \in \Z^{d} } \e^{\im \vec{p} \cdot (x-N\eta) -\im \vec{k}\cdot \xi}  \Psi (x -\xi -N\eta , -\xi  , \sigma_{\xi}\omega), \end{equation}
a unitary map from  $L^2(\Z^{d} \times \Z^{d} \times \Omega) \to  L^2(\Lambda \times \Omega \times \bb{T}_{1}^{d} \times \bb{T}_{N}^d).$ 
Thus we have, by \eqref{E(Rho)},
\begin{equation}
\sum_{x} \e^{-\im \vec{k} \cdot x} \Ev{\rho_{t}(x,x)} \ = \ \frac{N^{d}}{(2\pi)^{d}}\int_{ \bb T_{N}^d} \di \vec{p}  \ip{ \delta_{0} \otimes 1, \e^{-t \wt{L}_{\vec{k} , \vec{p}}} \wt{\rho}_{0;\vec{k}, \vec{p}}\otimes 1}_{L^{2}(\Lambda \times \Omega)} \label {E(Rho3)FT},
\end{equation}
where
 \begin{align}
\wt{\rho}_{0;\vec{k},\vec{p}}(x) \ &= \ \sum_{y, \eta} \e^{\im \vec{p}\cdot(x-N \eta)-\im \vec{k}\cdot y} \rho_{0}(x-N\eta -y,-y),  
\end{align}
and $\wt{L}_{\vec{k} , \vec{p}}  \ := \ \im \wt{K}_{\vec{k} , \vec{p}} + \im \wt{V} + B$
with
\begin{align}
    \wt{V} \wt{\psi}(x,\omega) &=    (v_{x}(\omega) -v_{0}(\omega))\wt \psi (x, \omega),
    \intertext{and}
    \wt{K}_{\vec{k}, \vec p} \wt{\psi}(x,\omega )  &=  \sum_{\zeta} h(\zeta)\e^{\im \vec {p} \cdot \zeta} \left [ 
     \wt{\psi}(x-\zeta,\omega) - \e^{-\im \vec{k} \cdot \zeta}
     \wt{\psi}(x-\zeta, \sigma_{\zeta}\omega) \right]. \label{eq:wtK}
\end{align}
(In \eqref{eq:wtK} we take ``periodic boundary conditions,''  that is $x -\zeta$ on the right hand side is evaluated modulo $N$.)

The transformed Feynmann-Kac formula \eqref{E(Rho3)FT} is the starting point for our proof of Theorem \ref{thm:main}.  It reduces the study of the  mean density in   \eqref{eq:main} to the spectral analysis of the semi-group $\e^{-t \wt{L}_{\vec{k}, \vec{p}}}$  for each fixed $\vec p$ and for $\vec{k}$ in a small neighborhood of $0$.

\section{Spectral analysis of $\wt{L}_{\vec{k} , \vec{p}}$ and the proof of Theorem \ref{thm:main}}
In this section inner products and norms are taken in the space $L^{2}(\Lambda \times \Omega)$ unless otherwise indicated. We denote by $P_{0}$ the  orthogonal projection of $L^{2}(\Lambda \times \Omega)$ onto the space $\cu{H}_{0}= \ell^{2}(\Lambda)\otimes \{1\}$ of ``non-random'' functions,
\begin{equation}
P_{0}\Psi(x) \ = \ \int_{\Omega } \Psi(x,\omega) \di \mu(\omega), 
\end{equation} 
and  by $P_{0}^{\perp}=(1-P_{0})$  the projection onto mean zero functions 
\begin{equation}
\cu{H}^{\perp}_{0} = \set{ \Psi(x,\omega) \ : \ \int_{\Omega} \Psi(x,\omega) \di\mu(\omega) = 0}.
\end{equation} 

A preliminary observation is that
\begin{equation}\label{eq:groundstate}
\wt{L}_{\vec{0},\vec{p}} \delta_{0} \otimes 1 = 0 
\end{equation}
for all $\vec{p}$. Thus,  $\delta_{0} \otimes 1$ is stationary under each semigroup $e^{-t \wt{L}_{\vec{0},\vec{p}}}$.  Eq.\ \eqref{eq:groundstate} can be seen easily from the explicit form for $\wt{L}_{\vec{0},\vec{p}}$ given above, but could also be derived from the fact that, for each $y \in \Z^{d}$,
$$\sum_{x} \Ev{\rho_{t}(x+Ny,x)}$$
is constant in time.

A key step toward proving Theorem \ref{thm:main} is to observe that the remaining spectrum of $\wt{L}_{\vec{0},\vec{p}}$ is contained in a half plane with strictly positive real part.  To see this, we make use of the block decomposition of $\wt{L}_{ \vec{0}, \vec{p}}$   with respect to the
direct sum $\cu{H}_{0} \oplus \cu{H}_{0}^{\perp}$:
\begin{equation}\label{eq:blockform}
\wt{L}_{\vec{0} , \vec{p}}  \ = \ \begin{pmatrix}
0 & \im P_{0} \wt{V} \\
\im  \wt{V} P_{0} & \im \wt{K}_{\vec{0}, \vec{p}} + B + \im P_{0}^{\perp} \wt{V} P_{0}^{\perp}
\end{pmatrix}.
\end{equation}
(Note that $\wt{K}_{\vec{0},\vec{p}}$ and $B$ both act trivially on $\cu{H}_{0}$, while $P_{0} \wt{V} P_{0}=0$ since $\int_{\Omega}(v_{x}(\omega) -v_{0}(\omega) ) d \mu(\omega)=0$.)
 
We use  \eqref{eq:blockform} to prove the following 
\begin{lem}\label{lem:L0}
There is $\delta >0$ such that for all $\vec{p} \in \bb{T}^{d}_{N}$,
\begin{equation}
\sigma(\wt{L}_{\vec{0}, \vec{p}}) \ = \ \{0\} \cup \Sigma_{+}(\vec{p}) 
\end{equation}
where $0$ is a non-degenerate eigenvalue and  $\Sigma_{+}(\vec{p}) \subset \set{z \ : \ \Re z >\delta}.$ 
\end{lem}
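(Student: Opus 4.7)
The approach is to exploit the block decomposition \eqref{eq:blockform} via a Schur-complement (Feshbach) reduction to the finite-dimensional subspace $\cu{H}_0$, combined with the non-degeneracy \eqref{eq:nondeg}. First set $D := (\im \wt{K}_{\vec{0},\vec{p}} + B + \im P_0^\perp \wt{V} P_0^\perp)|_{\cu{H}_0^\perp}$. Because $\wt{K}_{\vec{0},\vec{p}}$ and $P_0^\perp\wt{V}P_0^\perp$ are self-adjoint and $B$ satisfies \eqref{eq:spectralgap}, one has $\Re \ip{f, D f} = \Re \ip{f, B f} \ge \|f\|^2/T$ for $f \in \cu{H}_0^\perp$ (applied fiberwise in $x$, using that $f(x,\cdot)$ is mean-zero for each $x$). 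Hence $(D - z)$ is invertible on $\cu{H}_0^\perp$ with $\|(D - z)^{-1}\| \le 1/(1/T - \Re z)$, uniformly in $\vec p$ whenever $\Re z < 1/T$.

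Standard block-matrix manipulations then show that, for $\Re z < 1/T$, $\wt{L}_{\vec{0},\vec{p}} - z$ is invertible on $\cu{H}$ if and only if the Schur complement
\[
S(z,\vec p) \, := \, -z I + M(z,\vec p), \qquad M(z,\vec p) \, := \, P_0 \wt V (D - z)^{-1} \wt V P_0,
\]
is invertible on the finite-dimensional space $\cu{H}_0 \cong \C^{N^d}$, with null spaces isomorphic. The crucial observation is that $\wt V \delta_0 = 0$, because $(v_x - v_0)\,\delta_{x,0} \equiv 0$; thus $M(z,\vec p)\delta_0 = 0$ for every $z$, and an analogous adjoint computation gives $M(z,\vec p)^\ast \delta_0 = 0$. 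Consequently $\delta_0^\perp \subset \cu{H}_0$ is $M(z,\vec p)$-invariant and $\det S(z,\vec p) = (-z)\cdot\det(M(z,\vec p) - z)|_{\delta_0^\perp}$. For $\phi = \alpha\delta_0 + \phi^\perp \in \ker S(0,\vec p)$, the equation $M(0,\vec p)\phi^\perp = 0$ together with $\xi_0 := D^{-1}\wt V\phi^\perp$ gives $0 = \Re\ip{\phi^\perp, M(0,\vec p)\phi^\perp} = \Re\ip{\xi_0, B\xi_0} \ge \|\xi_0\|^2/T$, hence $\wt V\phi^\perp = 0$; by Assumption 4 and $\phi^\perp(0)=0$ this forces $\phi^\perp=0$. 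Thus $\ker S(0,\vec p)=\C\delta_0$, and the factorization of $\det S$ above shows that $0$ is an \emph{algebraically} simple eigenvalue of $\wt L_{\vec 0,\vec p}$.

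The main obstacle is producing the uniform lower bound $\delta > 0$. Suppose $z \ne 0$ is an eigenvalue of $\wt L_{\vec{0},\vec p}$ with $\Re z \le 1/(2T)$. Since the $\C\delta_0$-factor of $\det S(z,\vec p)$ equals $-z \ne 0$, invertibility must fail on $\delta_0^\perp$, so there is a unit $\phi \in \delta_0^\perp$ with $M(z,\vec p)\phi = z\phi$. Setting $\xi_z := (D-z)^{-1}\wt V\phi$ and taking real parts of $\ip{\phi, M(z,\vec p)\phi} = z$ yields
\[
(1/T - \Re z)\|\xi_z\|^2 \, \le \, \Re z.
\]
On the other hand, rewriting $(D - z)\xi_z = \wt V\phi$ as $B\xi_z = \wt V\phi + z\xi_z - \im(\wt K_{\vec 0,\vec p} + P_0^\perp\wt V P_0^\perp)\xi_z$, applying $B^{-1}$ (which has $\|B^{-1}\|_{\cu H_0^\perp}\le T$ by the spectral gap), and the triangle inequality give
\[
\|\xi_z\|\bigl[1 + T(|z| + \|\wt K_{\vec 0,\vec p}\| + \|\wt V\|)\bigr] \, \ge \, \|B^{-1}\wt V\phi\|.
\]
The non-degeneracy \eqref{eq:nondeg} with $\phi(0)=0$ gives $\|B^{-1}\wt V\phi\|^2 = \sum_{x \ne 0} |\phi(x)|^2 \|B^{-1}(v_x-v_0)\|^2 \ge \chi^2$, so $\|\xi_z\|$ is bounded below by a positive constant depending only on $|z|$ and on the $\vec p$-independent quantities $\|\wt K_{\vec 0,\vec p}\| \le 2\sum_\zeta|h(\zeta)|$ and $\|\wt V\|$. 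Finally, writing $z = \ip{\Psi,\wt L_{\vec 0,\vec p}\Psi}$ for a unit eigenvector $\Psi$ and using sectoriality \eqref{eq:sectoriality} gives $|\Im z| \le \gamma\Re z + \|\wt K_{\vec 0,\vec p}\| + \|\wt V\|$, so $|z|$ is uniformly bounded whenever $\Re z \le 1/(2T)$. Combining the three estimates produces the $\vec p$-independent positive lower bound $\delta$, completing the proof.
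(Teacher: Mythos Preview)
Your proof is correct and follows essentially the same Schur-complement route as the paper: reduce to the finite-dimensional space $\cu{H}_0$, use the non-degeneracy \eqref{eq:nondeg} to bound $\|B^{-1}\wt V\phi\|$ from below on $\delta_0^\perp$, and invoke sectoriality to control $|z|$ in terms of $\Re z$. Your presentation is somewhat more explicit than the paper's (which defers to \cite{KS} for parts of the argument): you separate the analysis into the factorization $\det S(z,\vec p)=(-z)\det(M(z,\vec p)-z)|_{\delta_0^\perp}$, give a self-contained argument for algebraic simplicity of $0$, and split the key estimate into an upper bound $\|\xi_z\|^2\le \Re z/(1/T-\Re z)$ and a lower bound $\|\xi_z\|\ge \chi/[1+T(|z|+\cdots)]$, whereas the paper combines these into a single lower bound on $\Re\Gamma_{\vec p}(z)$; but the content is the same.
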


\begin{proof} This is very close to \cite[Lemma 3]{KS}. The key new point is that we must see that $\delta$ can be chosen independently of $\vec{p}$.

Because $\Re B \ge \frac{1}{T} P_{0}^{\perp}$, it follows from an argument using Schur complements that a point $z$ with $\Re z < \frac{1}{T}$ is in $\sigma(\wt{L}_{\vec{0},\vec{p}})$ if and only if $z$ is in the spectrum of
\begin{equation}
\Gamma_{\vec{p}}(z) = P_{0} \wt{V} (P_{0}^{\perp} \wt{L}_{\vec{0},\vec{p}} P_{0}^{\perp}-z)^{-1} \wt{V} P_{0}. 
\end{equation}
However, given $\phi \in \ell^{2}(\Lambda)$,
\begin{align}
\Re & \ip{\phi\otimes 1, \Gamma_{\vec{p}}(z) \phi \otimes 1} \notag \\
&= \left \langle (P_{0}^{\perp} \wt{L}_{\vec{0} ,  \vec{p}} P_{0}^{\perp}-z)^{-1} \wt{V} \phi \otimes 1 \, ,\,  (\Re B - \Re z) (P_{0}^{\perp} \wt{L}_{\vec{0},\vec{p}} P_{0}^{\perp}-z)^{-1} \wt{V} \phi \otimes 1 \right \rangle  \notag \\
& \ge \left ( \frac{1}{T} - \Re z \right ) \norm{ (B^{-1} P_{0}^{\perp}  (\wt{L}_{\vec{0},\vec{p}} -z )P_{0}^{\perp})^{-1} B^{-1} \wt{V} \phi \otimes 1}^{2},
\end{align}
where the inverses are well defined because $\wt{V} \phi \otimes 1 \in \cu{H}_{0}^{\perp}=\ran P_{0}^{\perp}$. Since $\norm{B^{-1} P_{0}^{\perp}} \le T$, it follows that
\begin{equation}
\norm{B^{-1} P_{0}^{\perp}  (\wt{L}_{\vec{0},\vec{p}} -z )P_{0}^{\perp} } \ \le \ 1 + T\left (\| \wt{K}_{\vec{0},\vec{p}} \| + \| \wt{V} \|+ |z| \right ).
\end{equation}
However, $\|\wt{K}_{\vec{k},\vec{p}}\|\le 2 \|\wh{h}\|_{\infty}$ for all $\vec{k}$ and $\vec{p}$, so $B^{-1} P_{0}^{\perp}  (\wt{L}_{\vec{0},\vec{p}} -z )P_{0}^{\perp}$ is uniformly bounded and 
\begin{align}
\Re & \ip{\phi\otimes 1, \Gamma_{\vec{p}}(z) \phi \otimes 1} \notag \\
& \ge \left ( \frac{1}{T} - \Re z \right ) \frac{1}{ \left [1 + T(2 \| \wh{h}\|_{\infty} + 2\|\wt{V}\| + |z|) \right ]^{2}} \norm{  B^{-1} \wt{V} \phi \otimes 1}^{2}.
\end{align}
Finally,
\begin{equation}
\norm{  B^{-1} \wt{V} \phi \otimes 1}^{2} \ = \ \sum_{x}|\phi(x)|^{2} \norm{B^{-1}(v_{x}-v_{0})}_{L^{2}(\Omega)}^{2}  \ \ge \ \chi^{2}
\sum_{x \neq 0}|\phi(x)|^{2}, 
\end{equation}
where
\begin{equation}
\chi \ = \ \min_{\substack{x \in \Lambda \\ x \neq 0}} \norm{B^{-1}(v_{x}-v_{0})}_{L^{2}(\Omega)},
\end{equation}
which is positive by Assumption 4.

Thus,
\begin{equation}
\Re \Gamma_{\vec{p}}(z) \ \ge \ \left ( \frac{1}{T} - \Re z \right ) \frac{\chi^{2}}{\left [1 + T(2 \| \wh{h}\|_{\infty} + 2\|\wt{V}\| + |z|) \right ]^{2}}.
\end{equation}
Since the right hand side is independent of $\vec{p}$, the existence of a spectral gap $\delta$ independent of $\vec{p}$, as claimed, now follows from the sectoriality of $B$ (Assumption 2, eq.\ \eqref{eq:sectoriality}) as in the proof of \cite[Lemma 3]{KS}, with the explicit estimate
\begin{equation}
\delta \ge \frac{1}{T} \frac{\chi^{2 }}{\left ( 2 + \gamma + 4 T \|\wh{h}\|_{\infty} + 4 T \|\wt{V}\|\right )^{2}+ \| \wt{V}\|^{2}\chi^{2}}. \qedhere
\end{equation}
\end{proof}

\subsection{Analytic perturbation theory for $\wt{L}_{\vec{k} , \vec{p}}$} We now hold $\vec{p}$ fixed and consider the spectrum of $\wt{L}_{\vec{k},\vec{p}}$ for $\vec{k}$ close to $0$.  We write $\nabla$ for the gradient with respect to $\vec{k}$ and $\partial_{i}$ for partial differentiation with respect to the $i^{\text{th}}$ coordinate of $\vec{k}$.  \emph{No derivatives with respect to $\vec{p}$ appear below}.

The key observation is that the spectral gap for $\wt{L}_{\vec{0}, \vec{p}}$ is preserved in the spectrum of $\wt{L}_{\vec{k} , \vec{p}}$ for $\vec{k}$ sufficiently small.   
 
\begin{lem}\label{lem:analytic}
Given $\epsilon \in (0,\delta)$, with $\delta$ as in Lemma \ref{lem:L0},  there exists  $r$ such that if $|\vec{k}| < r $ then, for each $\vec{p} \in \bb{T}^{d}_{N}$,
\begin{enumerate}
\item $\wt{L}_{\vec{k} , \vec{p}}$ has a single non-degenerate eigenvalue $E_{\vec {p}}(\vec k)$ with $0 \le \Re E_{\vec{p}}(\vec{k}) < \delta -\epsilon$, 
\item The rest of the spectrum of $ \wt{L}_{\vec{k} , \vec{p}}$ is contained in the half plane $   \{ z : \Re z > \delta - \epsilon \}$. 
\end{enumerate}
Furthermore, $E_{\vec {p}}(\vec{k})$ is $C^{2}$ in a neighborhood of $0$,
\begin{equation}\label{eq:explicit0}
E_{\vec {p}}(\vec 0) = 0, \quad \nabla E_{\vec {p}}(\vec 0) = 0, 
\end{equation}
and
\begin{multline}\label{eq:explicit}
\partial_{i} \partial_{j} E_{\vec {p}}(\vec 0) \ = \ 2 \Re \ip{ \partial_{i} \wt{K}_{\vec{0} , \vec{p}} \delta_{0} \otimes 1, [\wt{L}_{\vec{0} , \vec{p}}]^{-1}  \partial_{j} \wt{K}_{\vec{0} , \vec{p}} \delta_{0} \otimes 1} \\
= \  2 \Re \sum_{x,y \in \Z^{d}} x_{i} y_{j} \overline{h(x)} h(y) \ip{ \delta_{[x]_{N}} \otimes 1 ,[\Gamma_{\vec{p}}(0)]^{-1} \delta_{[y]_{N}}\otimes 1} ,
\end{multline}
where $[x]_{N}$ denotes the point in $\Lambda$  equivalent to $x$ modulo $N$ and
\begin{equation}
\Gamma_{\vec{p}}(0) = P_{0}\wt{V}  ( P_{0}^{\perp} \wt{L}_{\vec{0} , \vec{p}}P_{0}^{\perp}  )^{-1} \wt{V} P_{0}.
\end{equation}
  In particular, $\partial_{i} \partial_{j} E_{\vec {p}}(\vec 0)$ is positive definite.  
\end{lem}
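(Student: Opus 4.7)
The plan is Kato-style analytic perturbation theory for $\vec k\mapsto\wt L_{\vec k,\vec p}$, treated as an analytic perturbation of $\wt L_{\vec 0,\vec p}$, with all estimates kept uniform in $\vec p\in\bb T_N^d$.

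The first ingredient is that $\vec k\mapsto\wt K_{\vec k,\vec p}$ is $C^{2}$ as a bounded-operator-valued map on $L^{2}(\Lambda\times\Omega)$, with norms of derivatives up to order $2$ bounded uniformly in $\vec p$. Formal differentiation of \eqref{eq:wtK} produces factors of $\zeta_{i}$ or $\zeta_{i}\zeta_{j}$, and the moment bound $\sum_\zeta|\zeta|^{2}|h(\zeta)|<\infty$ from Assumption 3 gives absolute convergence in operator norm. Combined with the $\vec p$-uniform spectral gap of Lemma \ref{lem:L0}, whose quantitative Schur-complement proof also provides a $\vec p$-uniform bound on $(z-\wt L_{\vec 0,\vec p})^{-1}$ along a contour $\gamma\subset\{|z|<\delta-\epsilon\}$ encircling $0$, a Neumann-series argument yields $r>0$ (independent of $\vec p$) such that for $|\vec k|<r$ the resolvent of $\wt L_{\vec k,\vec p}$ also exists on $\gamma$. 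The Riesz projection
\[
P(\vec k,\vec p) \ = \ \frac{1}{2\pi\im}\oint_\gamma\bigl(z-\wt L_{\vec k,\vec p}\bigr)^{-1}\,\di z
\]
is then $C^{2}$ in $\vec k$ and, by continuity of its rank, remains one-dimensional. This furnishes the simple isolated eigenvalue $E_{\vec p}(\vec k)=\operatorname{tr}(\wt L_{\vec k,\vec p}P(\vec k,\vec p))$ and verifies claims (1) and (2).

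For the derivatives at $\vec k=\vec 0$, I first check that $\delta_{0}\otimes 1$ is also the left null vector of $\wt L_{\vec 0,\vec p}$: $\wt V^{*}=\wt V$ because $v_{x}$ is real; $B^{*}1=0$ because $\mu$ is invariant for the Markov process (so $\int Bf\,\di\mu=0$); and a direct computation using $h(-\zeta)=\overline{h(\zeta)}$ and the $\mu$-invariance of $\sigma_\zeta$ shows $\wt K_{\vec 0,\vec p}^{*}=\wt K_{\vec 0,\vec p}$. The Riesz projection at $\vec k=\vec 0$ is therefore $|\delta_{0}\otimes 1\rangle\langle\delta_{0}\otimes 1|$, and Kato's first- and second-order formulas apply. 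The vanishing $\nabla E_{\vec p}(\vec 0)=\vec 0$ combines the contractivity of $\e^{-t\wt L_{\vec k,\vec p}}$ (which gives $\Re E_{\vec p}(\vec k)\ge 0$, so $\Re E$ attains a minimum at $\vec k=\vec 0$ and $\nabla\Re E|_{\vec 0}=0$) with a reflection symmetry under $\vec k\mapsto -\vec k$ derived from complex conjugation of the density-matrix equation \eqref{MAMDM}, which forces $\nabla\Im E|_{\vec 0}=0$. The second-order formula reads
\[
\partial_i\partial_j E_{\vec p}(\vec 0) \ = \ \langle\delta_{0}\otimes 1,\im\partial_i\partial_j\wt K_{\vec 0,\vec p}(\delta_{0}\otimes 1)\rangle - \operatorname{Symm}_{i,j}\langle\delta_{0}\otimes 1,(\im\partial_i\wt K_{\vec 0,\vec p})\,S\,(\im\partial_j\wt K_{\vec 0,\vec p})(\delta_{0}\otimes 1)\rangle,
\]
with $S$ the reduced resolvent of $\wt L_{\vec 0,\vec p}$ at $0$. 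The first term is purely imaginary by the $\zeta\leftrightarrow-\zeta$ symmetry together with $h(-\zeta)=\overline{h(\zeta)}$, so it drops out of $\Re\partial_i\partial_j E_{\vec p}(\vec 0)$. For the second, $\partial_j\wt K_{\vec 0,\vec p}(\delta_{0}\otimes 1)$ lies in $\cu H_{0}$, and on $\cu H_{0}$ the reduced resolvent $S$ acts, via the Schur-complement identification in the block decomposition \eqref{eq:blockform}, as $\Gamma_{\vec p}(0)^{-1}$; this recovers the explicit double-sum expression in \eqref{eq:explicit}.

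The main obstacle is strict positive-definiteness of the Hessian. Writing
\[
\sum_{i,j}\xi_{i}\xi_{j}\partial_i\partial_j E_{\vec p}(\vec 0) \ = \ 2\Re\langle v_{\vec\xi},\Gamma_{\vec p}(0)^{-1}v_{\vec\xi}\rangle,\qquad v_{\vec\xi}\ = \ \sum_{x}(\vec\xi\cdot x)\,h(x)\,\delta_{[x]_{N}}\otimes 1\in\cu H_{0},
\]
one must show the right-hand side is strictly positive for $\vec\xi\ne\vec 0$. Assumption 3 ensures that $v_{\vec\xi}$ has non-zero component off $\ker\Gamma_{\vec p}(0)=\operatorname{span}\{\delta_{0}\}$, and the coercivity estimate on $\Re\Gamma_{\vec p}(0)$ restricted to $(\ker\Gamma_{\vec p}(0))^{\perp}$ already obtained in the proof of Lemma \ref{lem:L0} then provides the desired positive lower bound, uniformly in $\vec p$. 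Continuity of $\vec p\mapsto D_{i,j}(\vec p)$ follows because the operators depend continuously on $\vec p$ and all the perturbation estimates are $\vec p$-uniform.
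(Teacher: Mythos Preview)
Your approach is essentially the paper's: Kato perturbation theory via Riesz projections, with Lemma~\ref{lem:L0} supplying the $\vec p$-uniform gap and a Neumann series giving stability of the resolvent. In places you are actually more careful than the paper (checking that $\delta_{0}\otimes 1$ is also the left null vector; noting that the $\partial_i\partial_j\wt K$ contribution is purely imaginary rather than asserting it vanishes).

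There is, however, one genuine gap. Your contour $\gamma\subset\{|z|<\delta-\epsilon\}$ suffices to isolate the simple eigenvalue near $0$ (claim (1)), but it does \emph{not} by itself yield claim (2), that the remaining spectrum lies in $\{\Re z>\delta-\epsilon\}$. Stability of a rank-one Riesz projection inside a small loop says nothing about where the rest of $\sigma(\wt L_{\vec k,\vec p})$ sits; a priori it could drift into the strip $\{0\le\Re z<\delta-\epsilon\}$ outside $\gamma$. The paper closes this by taking instead a large rectangle $\cu C$ whose right edge is the segment $\{\Re z=\delta-\epsilon,\ |\Im z|\le R\}$, obtaining a $\vec p$-uniform resolvent bound on all of $\cu C$, and then invoking the numerical range: sectoriality of $B$ together with boundedness of $\wt K_{\vec k,\vec p}$ and $\wt V$ confines $\num(\wt L_{\vec k,\vec p})$, hence $\sigma(\wt L_{\vec k,\vec p})$, to $\{x+\im y:x\ge 0,\ |y|\le C+\gamma x\}$, which for $R$ large lies either inside $\cu C$ or in $\{\Re z>\delta-\epsilon\}$. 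Your argument needs this second ingredient.

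A smaller point of comparison: for $\nabla E_{\vec p}(\vec 0)=0$ the paper uses Feynman--Hellman directly, observing that $\langle\delta_{0}\otimes 1,\partial_i\wt K_{\vec 0,\vec p}\,\delta_{0}\otimes 1\rangle=0$. Your route via the minimum of $\Re E$ plus a $\vec k\mapsto-\vec k$ reflection symmetry is different; if you pursue it, the conjugation symmetry of \eqref{MAMDM} must be tracked through the generalized Fourier transform and shown to yield $E_{\vec p}(-\vec k)=\overline{E_{\vec p}(\vec k)}$ with the \emph{same} $\vec p$, which is not immediate since the $x\leftrightarrow y$ swap also acts on the $\vec p$-variable.
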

\begin{proof} These are essentially standard facts from analytic perturbation theory.  The key point is that
\begin{equation}
\norm{\wt{L}_{\vec{k} , \vec{p}} - \wt{L}_{\vec{0} , \vec{p}}} \ \le \ c |\vec{k}|. \label{difference_in_k}
\end{equation}
If the generators $\wt{L}_{\vec{k},\vec{p}}$ were self-adjoint or normal it would now follow that the spectrum moves by no more than a distance $c |\vec{k}|$ for $\vec{k}$ small.  However, $\wt{L}_{\vec{k},\vec{p}}$ need not be normal so we must argue more carefully.

Due to the spectral gap $\delta$ between $0$ and the rest of the spectrum of $\wt{L}_{\vec{0},\vec{p}}$, we can fit  a contour $\cu{C}$ around the origin in the resolvent set.  Then \eqref{difference_in_k} shows that the spectrum cannot cross $\cu{C}$ for small $\vec{k}$.  A convenient choice for $\cu{C}$ is the rectangle
$$ \cu{C}= \left ( \delta -\epsilon + \im [-R,R] \right ) \cup \left ( [-R, \delta -\epsilon] + \im R \right ) \cup \left ( -R + \im [-R,R] \right ) \cup\left ( [-R, \delta -\epsilon] - \im R \right ),$$
with $R$ fixed independent of $\epsilon$, but sufficiently large.  By Lemma \ref{lem:L0}, 
\begin{equation} \label{boundonC} \sup_{\substack{z \in \cu{C} \\
\vec{p} \in \bb{T}^{d}_{N}}}\norm{ (\wt{L}_{\vec{0},\vec{p}} -z)^{-1}} < \infty.
\end{equation} 

Expanding the resolvent of $\wt{L}_{\vec{k},\vec{p}}$ in a Neumann series,
\begin{equation}
 ( \wt{L}_{\vec{k},\vec{p}} - z )^{-1} = \sum_{n=0}^{\infty } ( \wt{L}_{\vec{0},\vec{p}} - z )^{-1} \left [  (\wt{L}_{\vec{0} , \vec{p}} - \wt{L}_{\vec{k} , \vec{p}})  ( \wt{L}_{\vec{0},\vec{p}} - z )^{-1}\right ]^{n},
 \end{equation}
and using  \eqref{difference_in_k} and \eqref{boundonC}, we see that there is $r > 0$ such that if $|\vec{k}| < r$, then $\cu{C}$ is in the resolvent set of $\wt{L}_{\vec{k},\vec{p}}$.   However, the spectrum is a subset of the numerical range and the numerical range of $\wt{L}_{\vec{k},\vec{p}}$ is contained in the set
\begin{equation}
\set{ x+ \im y \ : \ x \ge 0 \ \& \ | y| \le C + \gamma x},
\end{equation}
with $C = 2 \| \wh{h}\|_{\infty} + 2 \|\wt{V}\|$. We conclude that 
\begin{equation}
\sigma(\wt{L}_{\vec{k},\vec{p}}) = \Sigma_{0} \cup \Sigma_{1}
\end{equation}
with $\Sigma_{0}$ inside $\cu{C}$ and $\Sigma_{1} \subset \{ z \ : \ \Re z > \delta -\epsilon\}$. 

It remains to show that $\Sigma_{0}$ consists of a non-degenerate eigenvalue and to derive \eqref{eq:explicit0} and \eqref{eq:explicit}. For this purpose, consider the (non-Hermitian)  Riesz projection 
\begin{equation}\label{eq:Qk}
Q_{\vec{k},\vec{p}} \ = \ \frac{1}{2 \pi \im} \int_{\cu{C}} \frac{1}{z - \wt{L}_{\vec{k} , \vec{p}}} \di z.
\end{equation}
The rank of $Q_{\vec{k},\vec{p}}$ is constant so long as $\cu{C}$ remains in the resolvent set.  Thus, $Q_{\vec{k} , \vec{p}}$ is rank one  for $|\vec{k}| <r$ and $\Sigma_{0} = \{ E_{\vec{p}}(\vec{k})\}$ with associated normalized eigenvector $\Phi_{\vec{k},{\vec {p}}}$ in the one-dimensional range of $Q_{\vec{k},\vec{p}}$.  Then, $E_{\vec {p}}(\vec 0)=0$ and $\Phi_{\vec 0 , \vec{p}}= \delta_{0}\otimes 1$. By the Feynman-Hellman formula, 
\begin{equation}
\partial_{i} E_{\vec {p}}(\vec{k}) \ = \ \ip{\Phi_{\vec{k},\vec{p} }, \partial_{i} \wt{L}_{\vec{k} , \vec{p}} \Phi_{\vec{k},\vec{p}}},
\end{equation}
from which it follows that $\grad E_{\vec {p}}(\vec 0) =0$ since $\grad \wt{L}_{\vec{k} , \vec{p}}= \im \grad \wt{K}_{\vec{k} , \vec{p}}$ is off-diagonal in the position basis on $\cu{H}_{0}$.  Similarly, 
\begin{multline}
\partial_{i} \partial_{j} E_{\vec {p}}(\vec{k}) \ = \ \ip{ \Phi_{\vec{k}, \vec{p}}, \partial_{i}\partial_{j} \wt{L}_{\vec{k} , \vec{p}} \Phi_{\vec{k}, \vec{p}}}
 +  \ip{\Phi_{\vec{k}, \vec{p}},  Q_{\vec{k}} \partial_{i} \wt{L}_{\vec{k} ,
\vec{p}}
( E_{\vec{p}}(\vec{k}) - \wt{L}_{\vec{k} , \vec{p}})^{-1} (1-Q_{\vec{k},
\vec{p}}) \partial_{j} \wt{L}_{\vec{k} , \vec{p}} \Phi_{\vec{k}, \vec{p}}}
\\
+  \ip{\Phi_{\vec{k}, \vec{p}},  Q_{\vec{k}} \partial_{j} \wt{L}_{\vec{k} ,
\vec{p}}
( E_{\vec{p}}(\vec{k}) - \wt{L}_{\vec{k} , \vec{p}})^{-1} (1-Q_{\vec{k},
\vec{p}}) \partial_{i} \wt{L}_{\vec{k} , \vec{p}} \Phi_{\vec{k}, \vec{p}}}
\end{multline}
The first term on the r.h.s.\ vanishes at $\vec{k}=0$ and the remaining two terms give \eqref{eq:explicit}. Because the form on the r.h.s of \eqref{eq:explicit} is positive definite,   the non-degeneracy condition on the hopping (Assumption 3) gives that $\partial_{i} \partial_{j} E_{\vec {p}}(\vec{0})$ is positive definite.
\end{proof}

It follows from Lemma \ref{lem:analytic} and the sectoriality \eqref{eq:sectoriality} of $B$ that the semigroup $e^{-t \wt{L}_{\vec{k},\vec{p}}}$ satisfies exponential bounds (see \cite[Lemma 4]{KS}):
\begin{lem}\label{lem:Lkdynamics} Given $\epsilon >0$ there is $C_{\epsilon } <\infty$ such that if $\vec{k}$ is sufficiently small, then 
\begin{equation}
\norm{\e^{-t \wt{L}_{\vec{k} , \vec{p}}}(1- Q_{\vec{k} , \vec{p}})} \le C_{\epsilon}\e^{-t (\delta  - \epsilon )}
\end{equation}
for all $\vec{p}$, where $Q_{\vec{k},\vec{p}}$ is the rank one Riesz projection \eqref{eq:Qk} onto the non-degenerate eigenvector of $\wt{L}_{\vec{k},\vec{p}}$ with eigenvalue near $0$.
\end{lem}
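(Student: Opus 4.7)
The argument parallels \cite[Lemma 4]{KS}; the new content is to verify that all constants may be chosen independently of $\vec{p} \in \bb{T}^d_N$. I would begin by writing the restricted semigroup as a Dunford contour integral,
\begin{equation*}
e^{-t\wt{L}_{\vec{k},\vec{p}}}(1-Q_{\vec{k},\vec{p}}) \ = \ \frac{1}{2\pi\im}\int_\Gamma e^{-tz}(z-\wt{L}_{\vec{k},\vec{p}})^{-1}(1-Q_{\vec{k},\vec{p}})\,\di z,
\end{equation*}
where $\Gamma$ encloses the outer spectral component $\Sigma_1 \subset \{\Re z > \delta - \epsilon/2\}$. The inclusion holds uniformly in $\vec{p}$ by Lemma \ref{lem:analytic} applied with tolerance $\epsilon/2$, after shrinking $r$ if necessary.

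A convenient choice for $\Gamma$ is the unbounded boundary of the shifted sector $\{z = x+\im y : x \ge \delta -\epsilon,\ |y| \le C' + \gamma x\}$, where $C'$ is larger than the constant $C = 2\|\wh{h}\|_\infty + 2\|\wt{V}\|$ that appears in the numerical-range estimate in the proof of Lemma \ref{lem:analytic}; crucially, $C$ and $\gamma$ are $\vec{p}$-independent. On $\Gamma$, $|e^{-tz}| \le e^{-t(\delta-\epsilon)}$. Along the slanted rays of $\Gamma$, $z$ stays a fixed positive distance from the numerical range of $\wt{L}_{\vec{k},\vec{p}}$, so the standard numerical-range resolvent estimate is uniform in $\vec{k},\vec{p}$; combined with the factor $e^{-t\Re z}$ integrated along the rays this produces an $O(t^{-1})e^{-t(\delta-\epsilon)}$ contribution, while for $t$ near zero the trivial bound $\|e^{-t\wt{L}_{\vec{k},\vec{p}}}\| \le \const$ from contractivity closes the estimate. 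Uniform boundedness of $\|Q_{\vec{k},\vec{p}}\|$, required to pass $1-Q_{\vec{k},\vec{p}}$ under the norm, follows from its contour-integral representation \eqref{eq:Qk} on a small contour around the origin.

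The main technical point is a uniform-in-$\vec{p}$ resolvent bound on the compact vertical edge $\Re z = \delta - \epsilon$ of $\Gamma$, where the numerical-range estimate is too weak because $z$ is close to $\Sigma_1$. Here the uniform bound \eqref{boundonC} from Lemma \ref{lem:L0} provides the required estimate at $\vec{k}=0$ on the fixed rectangle $\cu{C}$, and the Neumann series expansion justified by \eqref{difference_in_k} transfers this bound to $\wt{L}_{\vec{k},\vec{p}}$ for $|\vec{k}|<r$, uniformly in $\vec{p}$; a compactness argument covers the edge by finitely many discs on which the Neumann series converges uniformly. Collecting the uniform bounds yields $C_\epsilon$ depending only on $\epsilon$, $T$, $\gamma$, $\|\wh{h}\|_\infty$, $\|\wt{V}\|$, and the non-degeneracy constant $\chi$ from \eqref{eq:nondeg}, all manifestly independent of $\vec{p}$.
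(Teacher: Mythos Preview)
Your proposal is correct and matches the paper's approach. The paper does not give an explicit proof here but simply states that the lemma follows from Lemma~\ref{lem:analytic} and the sectoriality \eqref{eq:sectoriality} of $B$, referring to \cite[Lemma 4]{KS}; your sketch is a faithful and accurate reconstruction of that argument, with the correct emphasis on the new point, namely that the numerical-range constants $C,\gamma$ and the resolvent bound \eqref{boundonC} are $\vec{p}$-independent, so that the Dunford contour and the resulting $C_\epsilon$ can be chosen uniformly over $\bb{T}^d_N$.
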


\subsection{Proof of Theorem \ref{thm:main}}
As in \cite{KS}, it suffices to prove the theorem for  $\rho_{0}$ satisfying
\begin{equation}\label{eq:summable}
\sum_{x,y} |\rho_{0}(x,y)| < \infty,
\end{equation}
since any initial density matrix can be approximated in trace norm arbitrarily well using such $\rho_{0}$. 
Assuming \eqref{eq:summable}, note that
\begin{equation}
\wt{\rho}_{0;\vec{k}, \vec{p}}(x) \ = \ \sum_{\eta,y \in \Z^d} \rho_{0}(x-N\eta-y ,-y)\e^{\im \vec{p}\cdot (x-N \eta )-\im \vec{k}\cdot y} \label{fourier_transform_rho}
\end{equation}
is uniformly bounded in $\ell^{2}(\Lambda)$ as $\vec{p}$ varies through the torus:
\begin{equation}
\left [ \sum_{x}\abs{\wt \rho_{0;\vec{k}, \vec{p}}(x)}^{2} \right ]^{\half} \ \le \ \sum_{x}\abs{\wt \rho_{0;\vec{k}, \vec{p}}(x)} \ \le \  \sum_{x,y} \abs{\rho_{0}(x,y)}  < \infty.
\end{equation}
By \eqref{E(Rho3)FT}, we have
\begin{align} \label{eq:projectionsdecompose}
   \sum_{x} \e^{-\im \frac{1}{\sqrt{\tau}} \vec{k} \cdot x} \Ev{\rho_{\tau t}(x,x)} \  & = \  \frac{N^{d}}{(2\pi)^{d}}\int_{ \bb{T}_{ N}^d}\di \vec{p}\ip{ \delta_{0} \otimes 1, \e^{-\tau t \wt{L}_{ \vec{k}/\sqrt{\tau}, \vec{p}}} \wt{\rho}_{0;\frac{1}{\sqrt{\tau}}\vec{k}, \vec {p}}\otimes 1}  \\
&  = \ \frac{N^{d}}{(2\pi)^{d}} \int_{\bb{T}_{ N}^d}\di\vec{p} \ \e^{-\tau t E_{\vec {p}}(\vec{k}/\sqrt{\tau})} \ip{\delta_{0}\otimes 1, Q_{\frac{1}{\sqrt{\tau}}\vec{k},\vec{p}} \wt{\rho}_{0;\frac{1}{\sqrt{\tau}} \vec{k}, \vec {p}} \otimes 1}  \label {1st_term}\\ 
&  \quad  + \ \frac{N^{d}}{(2\pi)^{d}} \int_{\bb{T}_{ N}^d}\di\vec{p} \ \ip{ \delta_{0} \otimes 1, \e^{-\tau t \wt{L}_{ \vec{k}/\sqrt{\tau}, \vec{p}}}  (1 - Q_{\frac{1}{\sqrt{\tau}}\vec{k},\vec{p}}) \wt{\rho}_{0;\frac{1}{\sqrt{\tau}} \vec{k}, \vec {p}}\otimes 1}. \label{2nd_term}
\end{align}By Lemma \ref{lem:Lkdynamics},  
the integrand  in \eqref{2nd_term} is exponentially small in the large $\tau$ limit,
\begin{multline}\label{eq:decay}
\abs{\ip{ \delta_{0} \otimes 1,  (1 - Q_{\frac{1}{\sqrt{\tau}}\vec{k} , \vec{p}}) \e^{-\tau t \wt{L}_{ \vec{k}/\sqrt{\tau}, \vec{p}}} \wt{\rho}_{0;\frac{1}{\sqrt{\tau}} \vec{k}, \vec {p}}\otimes 1}}
\\ \le  \  \norm{(1 - Q_{\frac{1}{\sqrt{\tau}}\vec{k} , \vec{p}}) \e^{-\tau t \wt{L}_{ \vec{k}/\sqrt{\tau}, \vec{p}}}}   \norm{ \wt{\rho}_{0;\frac{1}{\sqrt{\tau}} \vec{k}, \vec {p}}\otimes 1} \
\le \ C_{\epsilon}  \e^{-\tau t (\delta-\epsilon  )} \ \ra \ 0.
\end{multline}

Regarding \eqref{1st_term}, we have by Taylor's formula, 
\begin{equation}
E_{\vec {p}}(\vec{k}/\sqrt{\tau}) \ = \ \frac{1}{2\tau} \sum_{i,j} \partial_{i} \partial_{j} E_{\vec {p}}(\vec 0)  \vec{k}_{i} \vec{k}_{j} \ + \ o\left ( \frac{1}{\tau} \right ),
\end{equation}
since $E_{\vec {p}}(\vec 0) = \nabla E_{\vec {p}}(\vec 0)=0$. Thus
\begin{equation}\label{eq:taylor}
\e^{-\tau t E_{\vec {p}}(\vec{k}/\sqrt{\tau})} \ = \ \e^{-t  \frac{1}{2} \sum_{i,j} \partial_{i} \partial_{j} E_{\vec {p}}(\vec 0)  \vec{k}_{i} \vec{k}_{j}} + o(1),
\end{equation}
and 
\begin{multline}\label{eq:final}
\sum_{x} \e^{-\im \frac{1}{\sqrt{\tau}} \vec{k} \cdot x} \Ev{\rho_{\tau t}(x,x)}  \ = \ \frac{N^{d}}{(2\pi)^{d}} \int_{\bb{T}_{ N}^d}\di\vec{p} \ \e^{-t \half \sum_{i,j} \partial_{i} \partial_{j} E_{\vec {p}}(\vec 0)  \vec{k}_{i} \vec{k}_{j}} \ip{\delta_{0}\otimes 1,  \wt{\rho}_{0;\frac{1}{\sqrt{\tau}} \vec{k}, \vec{p}}\otimes 1} + o(1)  \\  \xrightarrow[]{\tau \ra \infty} \ \frac{N^{d}}{(2\pi)^{d}} \int_{\bb{T}_{ N}^d}\di\vec{p}\  \e^{-t  \half \sum_{i,j} \partial_{i} \partial_{j} E_{\vec {p}}(\vec 0)  \vec{k}_{i} \vec{k}_{j}} \wt{\rho}_{0;\vec 0, \vec p}(0)
\end{multline}
since $Q_{\vec{k}, \vec{p}}^{\dagger} \delta_{0} \otimes 1 \ra \delta_{0} \otimes 1$ as $\vec{k} \ra 0$ and $\wt{\rho}_{0;\vec{k}, \vec{p}}(0)$ is continuous as a function of $\vec{k}$.   Letting $D_{i,j}(\vec{p})= \frac{1}{2} \partial_{i} \partial_{j} E_{\vec{p}}(\vec 0)$ and  $m(\vec{p}) = \frac{N^{d}}{(2\pi)^{d}} \wt{\rho}_{0;\vec 0, \vec p}(0)$ gives \eqref{eq:main} and completes the proof.\qed


\begin{thebibliography}{1}


  \bibitem{Erdos2007:621} L{\'a}szl{\'o} Erd{\H{o}}s, Manfred Salmhofer, and Horng-Tzer Yau. \newblock
  {\em Quantum diffusion for the {A}nderson model in the scaling limit.} \newblock
  Ann. Henri Poincar\'e \textbf{8} (2007), no.~4, 621--685. \MR{MR2333778 (2008g:82053)}
  
 \bibitem{Erdos2007:1}
L{\'a}szl{\'o} Erd{\H{o}}s, Manfred Salmhofer, and Horng-Tzer Yau.
\newblock {\em Quantum diffusion of the random {S}chr\"odinger evolution in the
  scaling limit. {II}. {T}he recollision diagrams.}
\newblock Comm. Math. Phys. \textbf{271} (2007) no.~1, 1--53. \MR{MR2283953 (2008h:82035)} 
  
  \bibitem{Erdos2008:211} L{\'a}szl{\'o} Erd{\H{o}}s, Manfred Salmhofer, and Horng-Tzer Yau. \newblock
  {\em Quantum diffusion of the random {S}chr\"odinger evolution in the scaling limit.} \newblock
  Acta Math. \textbf{200} (2008), no.~2, 211--277, \MR{MR2413135}
%
%

\bibitem {KS} Yang Kang and Jeffrey Schenker, \emph{Diffusion of wave packets in a Markov random potential}, J. Stat. Phys. 134 (2009), 1005-1022.



\bibitem{Ovchinnikov:1974eu}
A.~A. {Ovchinnikov} and N.~S. {{\'E}rikhman}, \emph{{Motion of a quantum
  particle in a stochastic medium}}, Soviet Journal of Experimental and
  Theoretical Physics \textbf{40} (1974), 733.

\bibitem{Pillet:1985oq}
Claude-Alain Pillet, \emph{Some results on the quantum dynamics of a particle
  in a {M}arkovian potential}, Comm. Math. Phys. \textbf{102} (1985), no.~2,
  237--254. 

\bibitem{Tcheremchantsev:1997kl}
Serguei Tcheremchantsev, \emph{Markovian {A}nderson model: bounds for the rate
  of propagation}, Comm. Math. Phys. \textbf{187} (1997), no.~2, 441--469.

\bibitem{Tcheremchantsev:1998qe}
Serguei Tcheremchantsev, \emph{Transport properties of {M}arkovian {A}nderson model}, Comm.
  Math. Phys. \textbf{196} (1998), no.~1, 105--131. 

\bibitem{W.-De-Roeck:rz}
J.~Fr{\"o}hlich, W.~De~Roeck and A.~Pizzo, \emph{Quantum {B}rowinian motion in a simple model system}, Comm. Math. Phys. \textbf{293} (2010), no.~2, 361-398.

\bibitem{Clark:2008}
J.~Clark, W.~De~Roeck and C.~Maes, \emph{Diffusive behavior from a quantum master equation}, 2008 preprint, arXiv:0812.2858v2.

\bibitem{Amir:2009}
A.~Amir,  Y.~Lahini and H.~B.~Perets, \emph{Classical diffusion of a quantum particle in a noisy environment}, Phys. Rev. E. \textbf{79} (2009), no.~5, 050105.

\end{thebibliography}
\providecommand{\bysame}{\leavevmode\hbox to3em{\hrulefill}\thinspace}
\providecommand{\MR}{\relax\ifhmode\unskip\space\fi MR }
\providecommand{\MRhref}[2]{%
  \href{http://www.ams.org/mathscinet-getitem?mr=#1}{#2}
}
\providecommand{\href}[2]{#2}

\end{document}